\numberwithin{equation}{section}
\renewcommand{\a}{\alpha}
\renewcommand{\b}{\beta}
\renewcommand{\d}{\delta}
\newcommand{\f}{\varphi}
\newcommand{\s}{\sigma}
\newcommand{\x}{\xi}
\renewcommand{\L}{\Lambda}
\newcommand{\co}{\mathbb{C}}
\newcommand{\re}{\mathbb{R}}
\newcommand{\ze}{\mathbb{Z}}
\def\pa{\partial}
\newcommand{\supp}{\mathrm{{supp}}}
\newcommand{\fourier}{\mathcal{F}} 
\newcommand{\torus}{\mathbb{T}}
\newcommand{\bigpare}[1]{\bigl(#1\bigr)}
\newcommand{\biggpare}[1]{\biggl(#1\biggr)}
\newcommand{\Bigpare}[1]{\Bigl(#1\Bigr)}
\newcommand{\bigset}[2]{\bigl\{#1\bigm|#2\bigr\}}
\newcommand{\norm}[1]{\| #1 \|}
\newcommand{\bignorm}[1]{\bigl\| #1 \bigr\|}
\newcommand{\Bignorm}[1]{\Bigl\| #1 \Bigr\|}
\newcommand{\abs}[1]{| #1 |}
\newcommand{\bigabs}[1]{\bigl| #1 \bigr|}
\newtheorem{thm}{Theorem}[section]
\theoremstyle{definition}
\newtheorem{ass}{Assumption}
\theoremstyle{remark}
\title{Remarks on discrete Dirac operators and their continuum limits\footnote{The author thanks  
Tohru Koma for informing him about the paper by Susskind, and Shinya Aoki for valuable discussion and comments. He also thanks Yukimi Goto and  Yukihide Tadano for stimulating discussions. 
The research was partly supported by JSPS Kakenhi Grant Number 21K03276.} }
\author{Shu Nakamura\footnote{Department of Mathematics, Faculty of Sciences, Gakushuin University, 1-5-1, Mejiro, Toshima, Tokyo, Japan 171-8588, \texttt{shu.nakamura@gakushuin.ac.jp}}}
\begin{document}
\maketitle

\begin{abstract}
We discuss possible definitions of discrete Dirac operators, and discuss their continuum limits. 
It is well-known in the lattice field theory that the straightforward discretization of the Dirac 
operator introduces unwanted spectral subspaces, and it is known as {\em the fermion doubling}. 
In oder to overcome this difficulty, two methods were proposed. The first one is to introduce 
a new term, called {\em the Wilson term}, and the second one is {\em the KS-fermion model} or
{\em the staggered fermion model}. 
We discuss mathematical formulations of these, and study their continuum limits. 
\end{abstract}

\section{Introduction}\label{sec-intro}

In a recent paper by Cornean, Garde and Jensen \cite{C-G-J-2}, they studied continuum limit of discretized 
Dirac operators in the sense of norm resolvent convergence, 
and they found that they do not converges to the (usual) Dirac operators. 
They found that if one add another term, then these operators converges to the Dirac 
operators. This corresponds to \textit{the Wilson term} in the lattice field theory. 
We discuss this method briefly, and then discuss another method, \textit{the KS-fermion} 
(or \textit{the staggered fermion}) model, 
which is mathematically ingenious and interesting in itself. Thus this note is partly a survey 
of these methods, but they are rigorously reformulated in relatively general settings, and we prove 
some new results on their continuum limits. 

The continuum limit of quantum Hamiltonian on the square lattice in the sense 
of (generalized) norm resolvent convergence was studied by Nakamura and Tadano \cite{NaTad}, 
and several research has been appeared based on the idea of the norm resolvent convergence 
(see also \cite{Post} for the concept of generalized resolvent convergence). 
Cornean, Garde and Jensen \cite{C-G-J} studied the convergence for more general Fourier multipliers, 
and Exner, Nakamura and Tadano \cite{ExNaTad} considered continuum limit for quantum graph 
Hamiltonians. As mentioned above, Cornean, Garde and Jensen \cite{C-G-J-2} considered 
the continuum limit of discretized Dirac operators, and found difficulty to show the convergence 
to the continuous Dirac operators. See also Schmidt and Umeda \cite{Schmidt-Umeda} and 
Isozaki and Jensen \cite{Isozaki-Jensen} for closely related results. 

It turned out that such difficulty was widely known in the lattice field (gauge) theory (see, 
e.g., \cite{Aoki}, \cite{Rothe}), and it is generally called \emph{the fermion doubling}. 
There are two standard methods to avoid the problem. The first one is adding an additional 
term to the Hamiltonian (or the Lagrangian), and it is called the Wilson term. 
The other method is called the KS-fermion model after Kogut and Susskind (\cite{Susskind}
and \cite{KS}). We try to reformulate these methods, especially the KS-fermion method 
so that it is appropriate to study the continuum limit in the norm resolvent sense, and prove the 
convergence of the continuum limit. 

The paper is constructed as follows. In Section~\ref{sec-pre}, we prepare several basic tools. 
At first we explain the notations concerning the square lattices, function spaces, Fourier transforms, 
and several kinds of difference operators. Then in Subsection~\ref{subsec-embedding}, 
we introduce a embedding operator from function space on the lattice to the Lebesgues space 
on the Euclidean space, which is necessary to study the continuum limit following \cite{NaTad}. 
In Subsection~\ref{subsec-freeDirac}, we recall the definition of the Dirac operators on the 
Euclidean spaces. In Section~\ref{sec-NGDiscretization}, we consider the discretization of
the Dirac operator using the symmetric difference operators, and explain why it is not 
appropriate to consider the continuum limit. In Section~\ref{sec-Wilson}, we introduce 
the Wilson term, and show the convergence of the continuum limit in the norm resolvent sense
for Hamiltonians with the Wilson term under suitable conditions. Section~\ref{sec-KSmodel} is devoted 
to the discussion of the KS-fermion model. We introduce the one-component KS-Hamiltonian 
on $d$-dimensional lattice (with fermion doubling problem), 
and then transform it to a $2^d$ components operator without the fermion doubling problem in Subsection~\ref{subsec-KS1}. We briefly examine the spectral properties of this operator in 
Subsection~\ref{subsec-KS2}, and prove the convergence to a continuum limit in Subsection~\ref{subsec-KS3}. Here the number of components, $2^d$, can be higher than 
those of the standard Dirac operator on $\re^d$. 
We discuss the model for the dimensions 1, 2 and 3 in Section~\ref{sec-examples}, 
and show that for $d=1$ the model is appropriate (and in fact studied 
in \cite{C-G-J-2} already), and for $d=2$ and $3$, the continuum limit is decomposed 
to a direct sum of two standard Dirac operators. 

\section{Preliminaries}\label{sec-pre}

\subsection{Notations}\label{subsec-notations}
We denote the square lattice in $\re^d$ with the lattice spacing $h>0$ by 
$h\ze^d=\bigset{hn}{n\in\ze^d}$. 
Let $\{e_j\}_{j=1}^d$ be the standard basis of $\re^d$, i.e., 
$e_j=(\d_{j,k})_{k=1}^d\in\ze^d$, $j=1,\dots, d$, where $\d_{j,k}$ denotes the Kronecker 
delta symbol. The basis (or the generators) of $h\ze^d$ is given by $\{he_1,\dots, he_d\}$. 
We recall the dual space (or the dual module) of $h\ze^d$ is given by 
$h^{-1}\torus^d = \re^d/(h^{-1}\ze^d)$. We note the dual lattice of $h\ze^d$ is $h^{-1}\ze^d$, and 
hence the inner product $z\cdot\x$ is well-defined modulo $\ze$ for $z\in h\ze^d$, $\x\in h^{-1}\torus^d$. 

We denote the standard $L^2$ space on the $d$ dimensional Euclidean space by $L^2(\re^d)$. 
We use the square summable function space on $h\ze^d$, 
namely $\ell^2(h\ze^d)$, and we use the norm defined by 
\[
\norm{u}_{\ell^2(h\ze^d)}^2 = h^d \sum_{z\in h\ze^d} |u(z)|^2, \quad u\in\ell^2(h\ze^d).
\]

We denote the Fourier transform on $\re^d$ by 
\[
\fourier u(\x) =\int_{\re^d} e^{-2\pi i x\cdot\x}u(x)dx \quad \text{for }u\in L^1(\re^d), \x\in\re^d, 
\]
and the inverse Fourier transform by $\fourier^*$. 
On the lattice $h\ze^d$, the Fourier transform $F_h$ : $\ell^2(h\ze^d)\to L^2(h^{-1}\torus^d)$ 
is defined by
\[
F_hu (\x) =h^d \sum_{z\in h \ze^d} e^{-2\pi i z\cdot\x} u(z), \quad  \x\in h^{-1}\torus^d. 
\]
for $u\in\ell^2(h\ze^d)$. 
$F_h$ is unitary, and the inverse is given by 
\[
F_h^* v(z) =\int_{h^{-1}\torus^d} e^{2\pi i z\cdot\x} v(\x)d\x, \quad z\in h\ze^d
\]
for $v\in L^2(h^{-1}\torus^d)$. 

The partial differential operator on $\re^d$, or the momentum operator is denoted by
\[
D_j =\frac{1}{i} \frac{\pa}{\pa x_j}, \quad j=1,\dots, d.
\]
On the lattice $h\ze^d$, we set the symmetric difference operators 
\[
D^S_{h;j}u(z) = \frac{1}{2ih}(u(z+h e_j)-u(z-h e_j)), \quad j=1,\dots, d, z\in h \ze^d, 
\]
as an approximation of $D_j$ on $h\ze^d$, where $u\in \ell^2(h\ze^d)$. 
We also write the forward and backward difference operators by 
\[
D^\pm_{h;j}u(z) = \pm \frac{1}{ih}(u(z\pm he_j)-u(z)), \quad z\in h\ze^d
\]
for $u\in \ell^2(h\ze^d)$. 

\subsection{Embedding of $\ell^2(h\ze^d)$ into $L^2(\re^d)$}\label{subsec-embedding}

We need an embedding operator $J_h: \ell^2(h\ze^d)\to L^2(\re^d)$ 
when we consider the continuum limit. We employ the following operators 
(\cite{NaTad},  see also \cite{C-G-J}). 

We need a function $\f\in\mathcal{S}(\re^d)$ such that $\{\f(\cdot-n)\mid n\in\ze^d\}$ 
is an orthonormal system in $L^2(\re^d)$. It is well-known that this condition is equivalent to 
\begin{equation}\label{eq-orthonormal}
\sum_{n\in\ze^d} |\hat\f(\x+n)|^2 =1\quad \text{for }\x\in\re^d, 
\end{equation}
where $\hat\f=\mathcal{F}\f$. We then set, for $z\in h\ze^d$, 
\[
\f_{h;z}(x)= \f(h^{-1}(x-z)), \quad x\in\re^d, 
\]
and we define 
\[
J_h u(x) =\sum_{z\in h\ze^d} u(z)\f_{h;z}(x), \quad x\in\re^d.
\]
It is easy to see $J_h$ is an isometry from $\ell^2(h\ze^d)$ to $L^2(\re^d)$ provided 
$\f$ satisfies \eqref{eq-orthonormal}, and the adjoint operator is given by 
\[
J_h^* v(z)= h^{-d} \int_{\re^d} \overline{\f_{h;z}(x)}v(x) dx, \quad z\in h\ze^d,
\]
where $v\in L^2(\re^d)$. (We remark that our notations are slightly different from \cite{NaTad}. 
In particular, $J_h=P_h^*$ in \cite{NaTad}). 
In the following, we always suppose 

\begin{ass}\label{ass-orthonormal-basis}
$\f$ satisfies the condition \eqref{eq-orthonormal}, and $\supp[\hat \f]\subset (-1,1)^d$. 
\end{ass}

We note there exists various such $\f$'s, and we simply choose one and fix it here. 
See \cite{NaTad} for the detail. 

\subsection{Free Dirac operators}\label{subsec-freeDirac}
We recall the definition of the Dirac operators on $\re^d$. See, e.g., Thaller \cite{Thaller} for the 
survey on Dirac operators. For simplicity, we mainly 
discuss the free operators without perturbations here. 
Let $\a_1,\dots,\a_d$ and $\b$ be a set of $N\times N$ Hermitian matrices such that 
\[
\a_i\a_j+\a_j\a_i=0, \quad \a_i\b +\b\a_i =0, \quad i\neq j, 
\]
and $\a_j^2=\b^2=\mathbf{1}_N$, where $N\in 2\mathbb{N}$ and $\mathbf{1}_N$ denotes the 
$N\times N$ identity matrix. Typical choices for $d=1,2,3$ are as follows: We denote a set of 
Pauli matrices by
\[
\s_1=\begin{pmatrix} 0 & 1 \\ 1 & 0 \end{pmatrix}, \quad 
\s_2=\begin{pmatrix} 0 & -i \\ i & 0 \end{pmatrix}, \quad
\s_3=\begin{pmatrix} 1 & 0 \\ 0 & -1 \end{pmatrix}. 
\]
For $d=1$, we set $N=2$ and $\a_1=\s_1$ and $\b=\s_3$. 
For $d=2$, we set $N=2$ and $\a_1=\s_1$, $\a_2=\s_2$ and $\b=\s_3$. 
For $d=3$, we set $N=4$ and 
\[
\a_j=\begin{pmatrix} 0 & \s_j \\ \s_j & 0 \end{pmatrix}  \text{ for } j=1,2,3; \quad 
\b=\begin{pmatrix} \mathbf{1}_2 & 0 \\ 0 & -\mathbf{1}_2 \end{pmatrix}.
\]
We then define the (free) Dirac operator by 
\[
H_0= \sum_{j=1}^d D_j \a_j + m \b \quad \text{on } [L^2(\re^d)]^{\oplus N}, 
\]
where $D_j=-i\pa/\pa{x_j}$, $j=1,\dots, d$, and $m\geq 0$. 

It is easy to see by the anti-commuting properties, 
\[
H_0^2 =\sum_{j=1}^d D_j^2\a_j^2 +m^2\b^2 = (-\triangle +m^2)\mathbf{1}_N,
\]
and hence $H_0$ is elliptic. It is also straightforward to show $H_0$ is self-adjoint with 
$\mathcal{D}(H_0)=[H^1(\re^d)]^{\oplus N}$. 
We note  
\[
\fourier H_0 \fourier^* u(\x) = \hat H_0(\x)u(\x), \quad \text{where } 
\hat H_0(\x)=\sum_{j=1}^d 2\pi\x_j\a_j +m\b.
\]
Since $\hat H_0(\x)^2=|2\pi\x|^2+m^2$, it can be shown that the eigenvalues of $\hat H_0(\x)$ are 
$\pm \sqrt{|2\pi\x|^2+m^2}$ with multiplicities $N/2$ each.

\section{Straightforward discretization of Dirac operators and the fermion doubling}
\label{sec-NGDiscretization}

We now discretize the Dirac operators on $h\ze^d$.
Using $D^S_{h;j}$, we may define the discretized Dirac operator by
\[
H^S_{0;h} = \sum_{j=1}^d D^S_{h,j}\a_j +m\b\quad\text{on }[\ell^2(h\ze^d)]^{\oplus N},
\]
which is a bounded symmetric operator. 
The symbol of $H_{0;h}^S$, $\hat H_{0;h}^S(\x)$,  is defined by 
\[
\hat H_{0;h}^S(\x)v(\x)= F_h H_{0;h}^S F_h^* v(\x) = \biggpare{\sum_{j=1}^d h^{-1}\sin(2\pi h\x_j)\a_j +m\b}v(\x)
\]
for $v\in [L^2(h^{-1}\torus^d)]^{\oplus N}$. The eigenvalues of $\hat H_{0;h}^S(\x)$ are given by 
\[
E_{\pm,h}(\x) =\pm \biggpare{ h^{-2} \sum_{j=1}^d \sin^2(2\pi h\x_j)+m^2}^{1/2}, 
\quad \x\in h^{-1}\torus^d. 
\]

We note the eigenvalues of $H_0$ are given by $E_{\pm}(\x)=\pm \sqrt{|2\pi \x|^2+m^2}$ 
and $|E_\pm(\x)|$  have only one critical point (local minimal point) $\x=0$ with the minimal value
$m$ if $m>0$. If $m=0$, $E_\pm(\x)=0$  only at $\x=0$. 
On the other hand, $|E_{\pm,h}(\x)|$ has $2^d$ local minimal points: $\{0,(2h)^{-1}\}^d$ 
(with the minimal value $m$) if $m>0$, and the $2^d$ zero points: $\{0,(2h)^{-1}\}^d$ if $m=0$. 
Hence when $h\to 0$, the resolvent of $H_{0;h}^S$ converges to the direct sum of $2^d$ copies of the 
resolvent of $H_0$ with suitable identification. 
In particular, $H_{0;h}^S$ cannot converges to the resolvent of $H_0$ in the norm resolvent sense
(see \cite{C-G-J-2} Theorem~4.7, Theorem~5.7). 
In physics terminology, this implies $H_{0;h}^S$ describes $2^d$ different fermion particles, 
and thus this phenomenon is called \textit{the fermion doubling} (\cite{Aoki, Rothe}). 
For this reason, $H_{0;h}^S$ is not considered a reasonable discretization of the Dirac operator. 

\section{The Wilson term}\label{sec-Wilson}
One standard procedure to avoid the fermion doubling is adding a term of the form 
\[
S_W=\rho (-\triangle_h)\b,
\]
to the Hamiltonian, where $\triangle_h$ is the standard difference Laplacian defined by 
\[
-\triangle_h u(z)=h^{-2}\sum_{j=1}^d (2u(z)-u(z+he_j)-u(z-he_j)), 
\quad z\in h\ze^d, 
\]
and $\rho>0$ is a small coupling constant. $S_W$ is called the Wilson term (see \cite{Aoki},
\cite{Rothe} Section~4.3). 
We set
\[
\tilde H_{0;h} = H_{0;h}^S+S_W.
\]
If $\rho\to 0$ and $\rho h^{-2} \to\infty$ as $h\to 0$, one can show that $\tilde H_{0;h}$ 
converges to $H_0$ in the norm resolvent sense as $h\to 0$ (Theorem~\ref{thm-Wilson-convergence}. 
See also \cite{C-G-J-2} Section~4.1 and Section~5.1, where the coupling constant is chosen 
as $\rho=h$). 

The Wilson term destroys the fermion doubling for the following simple reason. The symbol of 
$\tilde H_{0;h}$ is given by 
\[
\hat H_{0;h}(\x)=\sum_{j=1}^d h^{-1}\sin(2\pi h\x_j)\a_j +\biggpare{m+\rho\sum_{j=1}^d 2 h^{-2}(1-\cos(2\pi h\x_j))}\b, 
\]
and its eigenvalues are given by 
\[
\tilde E_{0;\pm}(\x)= \pm \biggpare{ \sum_{j=1}^d h^{-2}\sin^2(2\pi h\x_j) 
+\Bigpare{m+\rho\sum_{j=1}^d 2 h^{-2}(1-\cos(2\pi h\x_j))}^2}^{1/2}.
\]
These eigenvalues $|\tilde E_{0;\pm}|$ still have $2^d$ local minimal points in the case $m>0$, but 
 $|\tilde E_{0;\pm}|\geq m+\rho h^{-2}$ at these local minima, 
except for $\x=0$, and they diverges to $+\infty$ as $h\to 0$. 
In the case $m=0$, these eigenvalues are at least of order $O(\rho h^{-2})$ away from 
any neighborhood of $\x=0$, and hence the absolute values of eigenvalues diverges to $+\infty$
as $h\to 0$. On the other hand, if $\rho\to 0$, the Wilson term is negligible in a neighborhood of 
$\x=0$ as $h\to 0$. Specifically, we have 

\begin{thm}\label{thm-Wilson-convergence}
Suppose $\rho\to 0$ and $\rho h^{-2}\to\infty$ as $h\to 0$. Then for $z\in\co\setminus\re$, 
\[
\Bignorm{(H_0-z)^{-1} - J_h (\tilde H_{0;h}-z)^{-1}J_h^*}_{\mathcal{B}(L^2(\re^d))}\to 0,
\quad \text{as }h\to 0.
\]
Since $J_h$ is an isometry, this also implies 
\[
\Bignorm{J_h^*(H_0-z)^{-1}J_h - (\tilde H_{0;h}-z)^{-1}}_{\mathcal{B}(\ell^2(h\ze^d))}\to 0,
\quad \text{as }h\to 0.
\]
\end{thm}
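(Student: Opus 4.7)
The approach is to pass to the Fourier side and compare symbols pointwise, in the spirit of \cite{NaTad}. There $H_0$ becomes the matrix multiplier $\hat H_0(\x)$ on $[L^2(\re^d)]^{\oplus N}$, while $\tilde H_{0;h}$ becomes $\hat H_{0;h}(\x)$ on $[L^2(h^{-1}\torus^d)]^{\oplus N}$. Using Assumption~\ref{ass-orthonormal-basis}, and in particular $\supp[\hat\f]\subset(-1,1)^d$, a short calculation shows that $\fourier J_h F_h^*$ is essentially multiplication by $\hat\f(h\x)$ composed with the periodic lift of the lattice data, with at most $3^d$ aliasing copies per fiber. Consequently, writing $R_h(z)=(\tilde H_{0;h}-z)^{-1}$, the operator $\fourier J_h R_h(z) J_h^* \fourier^*$ is an explicit matrix-valued Fourier multiplier on $\re^d$, which we want to compare to $(\hat H_0(\x)-z)^{-1}$.

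A standard low/high frequency split does the job. Fix a large cutoff $R>0$. On the low-frequency region $|\x|\le R$, for $h$ small there is no aliasing. Taylor expansion yields
\[
h^{-1}\sin(2\pi h\x_j)=2\pi\x_j+O(h^2 R^3),\qquad 2\rho h^{-2}(1-\cos(2\pi h\x_j))=\rho(2\pi\x_j)^2+O(\rho h^2 R^4),
\]
so $\hat H_{0;h}(\x)\to \hat H_0(\x)$ uniformly on $|\x|\le R$ as $h\to 0$, using $\rho\to 0$. Combined with $\hat\f(h\x)\to\hat\f(0)$, where $|\hat\f(0)|=1$ is forced by evaluating \eqref{eq-orthonormal} at $\x=0$ together with the support condition, the low-frequency part of the symbol difference tends to $0$ uniformly.

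The heart of the argument is the high-frequency region $|\x|>R$. The continuous side is handled by the elliptic bound $\|(\hat H_0(\x)-z)^{-1}\|=O(R^{-1})$. The discretized side requires a case split on the fundamental-domain variable $\y=\x\bmod h^{-1}$. If $|h\y|_\infty\le 1/4$, the elementary inequality $|\sin(2\pi h\y_j)|\ge 4|h\y_j|$ gives $\sum_j h^{-2}\sin^2(2\pi h\y_j)\ge 16|\y|^2$, so the resolvent norm is $O(1/|\y|)$; a separate treatment, based on the fact that aliased $\x$ with small $|\y|$ forces $|h\x_j|$ close to $1$ in some direction and exploits the vanishing of $\hat\f$ at the boundary of its support, is needed to handle the contributions where $|\y|$ is not comparable to $R$. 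If $|h\y|_\infty>1/4$, then $1-\cos(2\pi h\y_j)\ge 1$ for some $j$, so the $\b$-coefficient of $\hat H_{0;h}(\y)$ is at least $2\rho h^{-2}$, giving $|\tilde E_{0;\pm}(\y)|\ge 2\rho h^{-2}$ and a resolvent bound $O((\rho h^{-2})^{-1})$ which vanishes under $\rho h^{-2}\to\infty$.

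Combining the low- and high-frequency estimates gives the first convergence. The second statement follows because $J_h$ is an isometry, so $J_h^*J_h=I$ on $\ell^2(h\ze^d)$. The main obstacle is the interplay in the high-frequency case between aliasing and the boundary decay of $\hat\f$; the crucial ingredient is the Wilson-term lower bound $|\tilde E_{0;\pm}(\y)|\gtrsim \rho h^{-2}$ near the spurious fermion-doubling minima, which is precisely why the hypothesis $\rho h^{-2}\to\infty$ is indispensable.
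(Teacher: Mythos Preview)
Your approach is in the same Fourier-side spirit as the paper's, but it is more convoluted than necessary and leaves a genuine gap. The object $\fourier J_h R_h(z) J_h^*\fourier^*$ is \emph{not} a Fourier multiplier: the aliasing you mention couples the fibers $\x$ and $\x+h^{-1}n$, so a pointwise symbol comparison does not directly bound its norm. Your high-frequency analysis accordingly stalls at ``a separate treatment \dots\ is needed'', and the proposed fix---exploiting the vanishing of $\hat\f$ at the boundary of $(-1,1)^d$---is not quantitative under Assumption~A, which gives no decay rate of $\hat\f$ near $\partial(-1,1)^d$.

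The paper sidesteps all of this with the standard Nakamura--Tadano decomposition
\[
(H_0-z)^{-1}-J_h(\tilde H_{0;h}-z)^{-1}J_h^*
= (1-J_hJ_h^*)(H_0-z)^{-1}
- J_h\bigl[(\tilde H_{0;h}-z)^{-1}J_h^* - J_h^*(H_0-z)^{-1}\bigr].
\]
The first term is $O(h)$ by Lemma~2.2 of \cite{NaTad}. For the second, one passes to Fourier and uses that $\hat H_{0;h}$ is $h^{-1}$-periodic, so after reindexing the sum defining $Q_h$ the problem reduces to a \emph{pure pointwise} bound on $\supp\hat\f(h\cdot)\subset(-h^{-1},h^{-1})^d$: no low/high split, no aliasing bookkeeping. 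There one has
\[
\bigl|\hat H_{0;h}(\x)-\hat H_0(\x)\bigr|\le C(h^2|\x|^3+\rho|\x|^2),\qquad
\bigl|(\hat H_{0;h}(\x)-z)^{-1}\bigr|\le C(|\x|^{-1}+\rho^{-1}h^2),
\]
and the resolvent identity gives the explicit rate $O(h+\rho+(\rho h^{-2})^{-1})$, stronger than the mere $o(1)$ your argument aims for. Your identification of the key mechanism---the Wilson term providing the lower bound $|\tilde E_{0;\pm}(\x)|\gtrsim \rho h^{-2}$ away from $\x=0$---is correct and is exactly what produces the second term of the resolvent bound above.
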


\begin{proof}
The proof is essentially the same as the argument in \cite{NaTad} Section~2, and 
\cite{C-G-J-2} Sections~4.1 and 5.2. We only sketch the argument. We follow the 
notations of \cite{NaTad}, and we write $Q_h = F_h J_h^* \fourier^*$
and $\hat H_0=\fourier H_0 \fourier^* = \hat H_0(\x)\cdot$. Then we have
\begin{equation}\label{eq-Wilson-1}
\bignorm{(1-J_h J_h^*)(H_0-z)^{-1}}
= \bignorm{(1-Q_h^* Q_h)(\hat H_0-z)^{-1}}\leq Ch
\end{equation}
by the same proof as in Lemma~2.2, \cite{NaTad}. Now we note 
\[
\bigabs{\hat H_{0;h}(\x)-\hat H_0(\x)}\leq C h^2|\x|^3 + C |\rho|\,|\x|^2
\]
for $\x\in\re^d$, where $\abs{\cdot}$ in the left hand side denotes the operator norm 
in $\co^N$. We also note
\[
\bigabs{\bigpare{\hat H_0(\x)-z}^{-1}}\leq |\x|^{-1}
\]
and
\[
\bigabs{\bigpare{\hat H_{0;h}(\x)-z}^{-1}}\leq \max_\pm \bigabs{\tilde E_{0;\pm}(\x)-z}^{-1}
\leq C(|\x|^{-1} + |\rho|^{-1}h^2)
\]
on the support of $\hat\f(h\x)$. Combining these, we have 
\begin{align*}
\bigabs{(\hat H_0(\x)-z)^{-1}-(\hat H_{0;h}(\x)-z)^{-1}}
&\leq Ch^2|\x| + C|\rho|+C|\rho|^{-1}h^4|\x|^2\\
&\leq Ch + C|\rho|+C|\rho|^{-1}h^2
\end{align*}
on the support of $\hat\f(h\x)$. This implies 
\[
\bignorm{(\tilde H_{0;h}-z)^{-1}J_h^* -J_h^* (H_0-z)^{-1}} \leq C(h+|\rho|+|\rho h^{-2}|^{-1})
\]
as well as Lemma~2.3 of \cite{NaTad}. Combining this with \eqref{eq-Wilson-1}, we arrive at the conclusion. 
\end{proof}

\section{The KS-fermion model} \label{sec-KSmodel}

\subsection{The construction of the KS-Hamiltonian}\label{subsec-KS1}

Here we describe an interpretation of an idea by Susskind \cite{Susskind} (see also 
Kogut-Susskind \cite{KS} and \cite{Rothe} Section~4.4), which is called the KS-fermion (or the staggered fermion) model.  We write
\[
s_j(n)=\sum_{k=1}^j n_k, \quad \text{for }n\in\ze^d, j=1,\dots,d, 
\]
and we also set $s_0(n)=0$. We define operators $X_{h;j}$ and $Y_h$ on $\ell^2(h\ze^d)$ by 
\begin{align*}
X_{h;j} u(z)&= (-1)^{s_{j-1}(z/h)}D^S_{h;j}u(z), \quad z \in h\ze^d,\\
Y_hu(z) &= (-1)^{s_d(z/h)}u(z), \quad z \in h\ze^d,
\end{align*}
where $u\in \ell^2(h\ze^d)$ and $j=1,\dots, d$. By direct computations, we can easily show
\begin{align*}
X_{h;j}X_{h;k}+X_{h;k}X_{h;j}=0 \  \text{ if } j\neq k; \quad X_{h;j} Y_h+Y_hX_{h;j} =0 \  \text{ for }j=1,\dots,d, 
\end{align*}
and $X_{h;j}^2=(D^S_{h;j})^2$, $Y_h^2=1$. These properties suggest that 
\[
\tilde H_{\mathrm{KS};h} =\sum_{j=1}^d X_j +m Y
\]
may be considered as a discrete Dirac operator on $\ell^2(h\ze^d)$. 
In particular, we have
\begin{align*}
(\tilde H_{\mathrm{KS};h})^2u(z) &= \sum_{j=1}^d (D_{h;j}^S)^2u(z)+m^2u(z)\\
&=\sum_{j=1}^d (2h)^{-2}(2u(z)-u(z+2he_j)-u(z-2he_j)) +m^2u(z)\\
&=(-\triangle_{2h}+m^2)u(z)
\end{align*}
for $u\in \ell^2(h\ze^d)$. Whereas $\tilde H_{\mathrm{KS};h}$ is a scaler operator, i.e., 
an operator acting on the one-component function space, it still has the 
fermion doubling problem. In order to solve this problem, we transform the operator 
$\tilde H_{\mathrm{KS};h}$ to an operator $H_{\mathrm{KS};h}$ on $[\ell^2((2h)\ze^d)]^{\oplus 2^d}$. 
By doubling the lattice spacing, we reduce the period of the dual space by half, i.e., 
$((2h)\ze^d)'= (2h)^{-1}\torus^d$, and remove the problematic periodic critical points. 
In order to double the lattice spacing, we increase the number of components to $2^d$, 
in the following way: We define the set of indices $\L$ by 
\[
\L=\{0,1\}^d\subset \ze^d, \quad |\L|=2^d, 
\]
and we write $a=(a_1,\dots, a_d)\in\L$, where $a_j\in\{0,1\}$, $j=1,\dots,d$. 
We consider $2^d\times 2^d$-matrices of the form $L=(L_{a,b})_{a,b\in\L}$. 
We denote
\[
[\ell^2((2h)\ze^d)]^{\oplus\L} 
=\bigset{(u_a(z))_{a\in\L}}{u_a\in\ell^2((2h)\ze^d), a\in\L}.
\]
We define a unitary operator $U_h$ : $\ell^2(h\ze^d) \to [\ell^2((2h)\ze^d)]^{\oplus\L}$ as follows:
\[
(U_h u)_a(z) = 2^{-d/2} u(z+ha), \quad z\in (2h)\ze^d, a\in \L,
\]
for $u\in\ell^2(h\ze^d)$. The adjoint operator is given by 
\[
(U_h^* w)(z+ha) = 2^{d/2} w_a(z), \quad z\in (2h)\ze^d, a\in \L,
\]
where $w=(w_a)_{a\in\L}\in \ell^2((2h)\ze^d)]^{\oplus\L}$. 
Now we define the KS-Hamiltonian by
\[
H_{\mathrm{KS};h}=U_h \tilde H_{\mathrm{KS};h} U_h^*. 
\] 
By direct computations, we learn the $(a,b)$-component of
the matrix operator $U_h X_{h;j} U_h^*$ is given by 
\[
(U_h X_{h;j} U_h^*)_{a,b} u_b(z) = \begin{cases} 
(-1)^{s_{j-1}(a)} D^+_{2h;j} u_b(z)\quad &\text{if }b=a-e_j,  \\
(-1)^{s_{j-1}(a)} D^-_{2h;j} u_b(z)\quad &\text{if }b=a+e_j,  \\
0 \quad &\text{otherwise}, 
\end{cases}
\]
for $u_b\in \ell^2((2h)\ze^d)$, $a,b\in\L$, $j=1,\dots,d$. Thus we have 
\[
(H_{\mathrm{KS};h})_{a,b} u_b(z) =\begin{cases} 
(-1)^{s_{j-1}(a)} D^+_{2h;j} u_b(z)\quad &\text{if }b=a-e_j, \\
(-1)^{s_{j-1}(a)} D^-_{2h;j} u_b(z)\quad &\text{if }b=a+e_j, \\
m (-1)^{s_d(a)} u_b(z)\quad &\text{if } a=b, \\
0 \quad &\text{otherwise}, 
\end{cases}
\]
for $u_b\in \ell^2((2h)\ze^d)$, $a,b\in\L$. 

\subsection{KS-Hamiltonian in the Fourier space and its eigenvalues}\label{subsec-KS2}

At first we note 
\begin{align*}
(H_{\mathrm{KS};h})^2 &= U_h (\tilde H_{\mathrm{KS};h})^2 U_h^* \\
&=U_h (-\triangle_{2h}+m^2) U_h^* = (-\triangle_{2h}+m^2) \mathbf{1}_{|\L|}
\end{align*}
since $-\triangle_{2h}$ acts on each $(2h)\ze^d+ ha$, $a\in\L$. 

For simplicity, we denote $F_h\mathbf{1}_{|\L|}$ on $[\ell^2(h\ze^d)]^{\oplus \L}$ 
by the same symbol $F_h$. We set
\[
\hat H_{\mathrm{KS};h} = F_{2h} H_{\mathrm{KS};h} F_{2h}^*, 
\]
then it is a matrix with multiplication operators as the entries. Namely, if we denote 
the symbols of the forward/backward difference operators by 
\[
d^\pm_{h;j}(\x)= \pm \frac{e^{\pm2\pi ih\x_j}-1}{ih}, \quad j=1,\dots, d, 
\]
then we have 
\[
(\hat H_{\mathrm{KS};h}(\x))_{a,b} =\begin{cases} 
(-1)^{s_{j-1}(a)} d^+_{2h;j}(\x) \quad &\text{if }b=a-e_j, \\
(-1)^{s_{j-1}(a)} d^-_{2h;j}(\x) \quad &\text{if }b=a+e_j, \\
m (-1)^{s_d(a)} \quad &\text{if } a=b, \\
0 \quad &\text{otherwise}. 
\end{cases}
\]
By the above observation, we also have
\[
(\hat H_{\mathrm{KS};h}(\x))^2 = 
\biggpare{\sum_{j=1}^d (2h^2)^{-1}(1-\cos(4\pi h\x_j)) +m^2}\mathbf{1}_{|\L|}.
\]
In particular, this implies the eigenvalues of $\hat H_{\mathrm{KS};h}(\x)$ are given by 
\[
\pm \hat E_{\mathrm{KS};h}(\x) = \pm \biggpare{\sum_{j=1}^d (2h^2)^{-1}(1-\cos(4\pi h\x_j)) +m^2}^{1/2}. 
\]
We recall that $\hat E_{\mathrm{KS};h}(\x)$ is defined on $(2h)^{-1}\torus^d$, and hence it has a unique minimal 
point $\x=0$. In other words, $H_{\mathrm{KS};h}$ has no fermion doubling problem in 
the Fourier space, though it has many components. 

\subsection{Continuum limit of the KS-Hamiltonian} \label{subsec-KS3}

Now if we take the limit $h\to 0$, at least formally, $D^\pm_{2h;j}(\x) \to D_j$, and hence 
$H_{\mathrm{KS};h}\to H_{\mathrm{KS};0}$ with some differential operator with constant matrix 
coefficients $H_{\mathrm{KS};0}$ on $[L^2(\re^d)]^{\oplus\L}$. 
For $j=1,\dots,d$ and $a,b\in\L$, we set 
\[
(A_j)_{a,b} =\begin{cases} 
(-1)^{s_{j-1}(a)} \quad &\text{if }b=a+e_j \text{ or }b=a-e_j , \\
0 \quad &\text{otherwise}, 
\end{cases}
\]
and 
\[
B_{a,b} = (-1)^{s_d(a)}\d_{a,b}. 
\]
Then we have 
\[
H_{\mathrm{KS};0} = \sum_{j=1}^d A_j D_j +m B
\quad \text{on } [L^2(\re^d)]^{\oplus\L}.
\]
We can actually show $H_{\mathrm{KS};h}$ converges to $H_{\mathrm{KS};0}$ in the generalized 
norm resolvent sense.

\begin{thm}\label{thm-KSConvergence}
For $z\in\co\setminus\re$, 
\[
\Bignorm{(H_{\mathrm{KS};0}-z)^{-1} - J_{2h} (H_{\mathrm{KS};h}-z)^{-1}J_{2h} ^*}_{\mathcal{B}(L^2(\re^d))}\to 0,
\quad \text{as }h\to 0.
\]
Since $J_{2h} $ is an isometry, this also implies 
\[
\Bignorm{J_{2h} ^*(H_{\mathrm{KS};0}-z)^{-1}J_{2h}  - (H_{\mathrm{KS};h}-z)^{-1}}_{\mathcal{B}(\ell^2(h\ze^d))}\to 0,
\quad \text{as }h\to 0.
\]
\end{thm}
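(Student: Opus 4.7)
The plan is to follow the scheme of the proof of Theorem~\ref{thm-Wilson-convergence} almost verbatim, now in the $(2h)$-scaled Fourier picture, and to exploit the clean matrix structure recorded in Subsection~\ref{subsec-KS2}. I would set $Q_{2h}=F_{2h}J_{2h}^*\fourier^*$ and reduce the assertion to the pair of estimates
\[
\bignorm{(1-Q_{2h}^*Q_{2h})(\hat H_{\mathrm{KS};0}-z)^{-1}}\leq Ch
\]
and
\[
\Bigabs{(\hat H_{\mathrm{KS};h}(\x)-z)^{-1}-(\hat H_{\mathrm{KS};0}(\x)-z)^{-1}}\leq Ch \quad \text{on }\supp\hat\f(2h\x),
\]
where $\hat H_{\mathrm{KS};0}(\x)=\sum_{j=1}^d 2\pi \x_j A_j+mB$. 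The first estimate is copied from Lemma~2.2 of \cite{NaTad} and uses only Assumption~\ref{ass-orthonormal-basis}. The second is the substantive content, and the rest is assembled as in the proof of Theorem~\ref{thm-Wilson-convergence}; in contrast to the Wilson case there is no small coupling to optimize over, and we obtain an unconditional $O(h)$ rate.

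For the pointwise estimate I would first check that the matrices $A_j,B$ satisfy $A_jA_k+A_kA_j=0$ for $j\neq k$, $A_jB+BA_j=0$, and $A_j^2=B^2=\mathbf{1}_{|\L|}$; these relations are inherited from the discrete identities for $X_{h;j}$ and $Y_h$ stated in Subsection~\ref{subsec-KS1} and reflect the purpose of the sign pattern $(-1)^{s_{j-1}(a)}$ in the construction of $H_{\mathrm{KS};h}$. As a consequence $(\hat H_{\mathrm{KS};0}(\x))^2=(4\pi^2|\x|^2+m^2)\mathbf{1}_{|\L|}$, which yields both essential self-adjointness of $H_{\mathrm{KS};0}$ on $[\mathcal{S}(\re^d)]^{\oplus\L}$ by ellipticity and the resolvent bound $\bigabs{(\hat H_{\mathrm{KS};0}(\x)-z)^{-1}}\leq C(1+|\x|)^{-1}$ for $\Im z\neq 0$. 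Combining this with the identity $(\hat H_{\mathrm{KS};h}(\x))^2=\bigpare{\sum_{j=1}^d (2h^2)^{-1}(1-\cos(4\pi h\x_j))+m^2}\mathbf{1}_{|\L|}$ from Subsection~\ref{subsec-KS2} and the elementary lower bound $1-\cos\th\geq c\th^2$ on any interval $|\th|\leq 2\pi(1-\d)$, I get the analogous bound $\bigabs{(\hat H_{\mathrm{KS};h}(\x)-z)^{-1}}\leq C(1+|\x|)^{-1}$ uniformly in $h$ on $\supp\hat\f(2h\x)$. Finally, Taylor expansion gives
\[
d^\pm_{2h;j}(\x)-2\pi\x_j=\frac{e^{\pm 4\pi ih\x_j}-1\mp 4\pi ih\x_j}{2ih}=O(h\x_j^2)
\]
on the same support, so $\bigabs{\hat H_{\mathrm{KS};h}(\x)-\hat H_{\mathrm{KS};0}(\x)}\leq Ch|\x|^2$, and the second resolvent identity yields the pointwise $O(h)$ bound since $h|\x|^2/(1+|\x|)^2\leq Ch$.

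The step I expect to require the most care is the uniform-in-$h$ resolvent bound for $\hat H_{\mathrm{KS};h}$ on $\supp\hat\f(2h\x)$. On the Brillouin torus $(2h)^{-1}\T^d$ the function $1-\cos(4\pi h\x_j)$ vanishes both at $\x_j=0$ and at the identified boundary $\x_j=\pm(2h)^{-1}$, and this latter zero is exactly where the fermion doubling of $\tilde H_{\mathrm{KS};h}$ would reappear in the continuum limit. The argument therefore hinges on the observation that Assumption~\ref{ass-orthonormal-basis}, interpreted as the closed-support condition of a Schwartz function lying in the open cube $(-1,1)^d$, forces $|2h\x_j|$ to stay bounded away from $1$ on the support of $\hat\f(2h\x)$, and so cuts off these spurious minima in the Fourier-localized estimate. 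Once this technicality is in place the remaining estimates are entirely analogous to those in \cite{NaTad} and to the proof of Theorem~\ref{thm-Wilson-convergence}.
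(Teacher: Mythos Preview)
Your proposal is correct and follows essentially the same approach as the paper's proof: both work in the Fourier picture, bound $|d^\pm_{2h;j}(\x)-2\pi\x_j|$ by $O(h|\x|^2)$, use the scalar identities for $(\hat H_{\mathrm{KS};0})^2$ and $(\hat H_{\mathrm{KS};h})^2$ to get $O(|\x|^{-1})$ resolvent bounds on the relevant Fourier support, and then invoke the machinery of \cite{NaTad} exactly as in Theorem~\ref{thm-Wilson-convergence}. Your write-up is in fact more careful than the paper's sketch---you spell out why Assumption~\ref{ass-orthonormal-basis} keeps $|2h\x_j|$ away from $1$ (hence away from the spurious zeros of $1-\cos(4\pi h\x_j)$), and you correctly localize to $\supp\hat\f(2h\x)$ rather than $\supp\hat\f(h\x)$, which is the right scaling for $J_{2h}$.
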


\begin{proof}
We first note
\[
\bigabs{d^\pm_{2h;j}(\x)-2\pi\x_j}\leq \frac{(4\pi h\x_j)^2}{2\cdot2h} \leq 4\pi^2 h|\x|^2,
\]
and 
\[
\bigabs{(\hat{H}_{\mathrm{KS};0}(\x)-z)^{-1}}\leq \max_{\pm}\bigabs{(\pm(|2\pi\x|^2+m^2)-z)^{-1}}
\leq C|\x|^{-1},
\]
uniformly in $\x\in\re^d$, $h>0$, where $\hat H_{\mathrm{KS};0}(\x)=\sum_{j=1}^d 2\pi\x_j A_j +mB$. 
We also have 
\[
\bigabs{(\hat{H}_{\mathrm{KS};h}(\x)-z)^{-1}}=\max_{\pm}\bigabs{(\pm\hat E_h(\x)-z)^{-1}}
\leq C|\x|^{-1}
\]
on the support of $\hat\f(h\x)$. Combining these, we have 
\[
\bigabs{(\hat{H}_{\mathrm{KS};h}(\x)-z)^{-1}-(\hat{H}_{\mathrm{KS};0}(\x)-z)^{-1}}
\leq C h
\]
on the support of $\hat\f(h\x)$, and then it is straightforward to show the claim as in 
the proof of Theorem~\ref{thm-Wilson-convergence}, or \cite{NaTad}. See also \cite{C-G-J-2},
Section~3.1. 
\end{proof}

We note $A_1, \dots,A_d$ and $B$ satisfy the following properties as well as 
$\a_1,\dots,\a_d$ and $\b$, i.e., 
\[
A_j A_k +A_k A_j =0 \ \text{if }j\neq k; \quad 
A_j B +B A_j =0, 
\]
and $A_j^2 =B^2 =\mathbf{1}_{|\L|}$, $j=1,\dots, d$. Thus we may consider 
$H_{\mathrm{KS};0}$ as a Dirac operator, but the number of components are not 
necessarily the same as the standard Dirac operators. Namely, if $d=1$, then 
$2^1=2$ and the number of components is the same as the standard one, 
but if $d=2$, then $2^2=4>2$, and if $d=3$ then $2^3=8>4$, and the number 
of components are twice as that of the standard Dirac operators. 
We expect $H_{\mathrm{KS};0}$ is decomposed to a direct sum of the 
standard Dirac operators, and we confirm it for $d\leq 3$ in Section~\ref{sec-examples}.


\section{Examples}\label{sec-examples}\label{sec-examples}

Here we consider KS-Hamiltonians and their continuum limit for $d=1,2$ and $3$. 

\subsection{1 dimensional case}\label{subsec-ex1D}
For $d=1$, the model is transparent and easy to understand. It is also essentially the same model 
discussed in \cite{C-G-J} Section~3.1 as \textit{the 1D forward-backward difference model}. 

At first, we have 
\[
\tilde H_{\mathrm{KS};h} u(x) = \frac{1}{2ih}(u(x+h)-u(x-h))  +(-1)^{x/h}m u(x), 
\quad x\in h\ze, 
\]
for $u\in\ell^2(h\ze)$, and hence 
\[
H_{\mathrm{KS};h} =\begin{pmatrix} m & D^+_{2h,1}\\ D^-_{2h;1} & -m \end{pmatrix}. 
\]
Its eigenvalues are $\pm \sqrt{(2h^2)^{-1}(1-\cos(4\pi h\x))+m^2}$, and the continuum limit is 
\[
H_{\mathrm{KS};0} =\begin{pmatrix} m & D\\ D & -m \end{pmatrix}
=D \s_1 + m \s_3, 
\quad\text{on }L^2(\re), 
\]
where $D=-i\frac{\pa}{\pa x}$, and thus we recover the standard 1D Dirac operator.

\subsection{2 dimensional case}\label{subsec-ex2D}
If $d=2$, the one component operator is given by 
\[
\tilde H_{\mathrm{KS};h} u(x,y) = D^S_{h;1}u(x,y)+(-1)^{x/h} D^S_{h;2}u(x,y)
 +(-1)^{(x+y)/h}m u(x,y)
\]
for $(x,y) \in h\ze^2$, where $u\in \ell^2(h\ze^2)$. We set 
\begin{align*}
&u_1(x,y)=u(x,y), \quad u_2(x,y)=u(x+h,y+h), \\
&u_3(x,y)=u(x,y+h), \quad u_4(x,y)=u(x+h,y)
\end{align*}
for $(x,y)\in 2h\ze^2$ and $u\in\ell^2(h\ze^2)$, and then we set
$U_hu=(u_j)_{j=1}^4$.  Applying the formula in Section~\ref{subsec-KS1} we have 
\[
H_{\mathrm{KS};h} =\begin{pmatrix} m & 0 & D^-_{2h,1}& D^-_{2h;2}\\ 
0 & m & -D^+_{2h;2} &D^+_{2h;1} \\ D^+_{2h;1} & -D^-_{2h;2} & -m & 0 \\
D^+_{2h;2} & D^-_{2h;1} & 0 & -m \end{pmatrix}, 
\]
and in the continuum limit, we obtain 
\[
H_{\mathrm{KS};0} =\begin{pmatrix} m & 0 & D_1& D_2\\ 
0 & m & -D_2 &D_1 \\ D_1 & -D_2 & -m & 0 \\
D_2 & D_1 & 0 & -m \end{pmatrix}.
\]
This does not look like the standard 2D Dirac operator, but if we set 
\[
M=\begin{pmatrix}
1 & 0 & 1 & 0 \\ i & 0 & -i & 0 \\ 0 & 1 & 0 & 1 \\ 0 & i & 0 & -i 
\end{pmatrix},
\]
then 
\begin{align*}
M^{-1} H_{\mathrm{KS};0} M &= {\small \begin{pmatrix}
m & D_1+iD_2 & 0 & 0\\D_1-iD_2 & -m & 0&0 \\0 & 0& m & D_1-iD_2 \\0  &0 & D_1+iD_2 & -m 
\end{pmatrix}}\\
& =\begin{pmatrix} \overline{D_1\s_1+D_2\s_2+m\s_3} & 0 \\ 0 & D_1\s_1+D_2\s_2+m\s_3 
\end{pmatrix}. 
\end{align*}
Thus we arrive at a direct sum of two standard 2D Dirac operators, 
one of which is simply the complex conjugate. 

\subsection{3 dimensional case}\label{subsec-ex3D}
For $d=3$, the computations look somewhat complicated. We omit to write down the definition 
of $\tilde H_{\mathrm{KS};h}$. We set 
\begin{align*}
&u_1(x,y,z)=u(x,y,z), \quad u_2(x,y,z)= u(x+h,y+h,z), \\
&u_3(x,y,z)=u(x,y+h,z+h), \quad u_4(x,y,z)=u(x+h,y,z+h) \\
&u_5(x,y,z)=u(x,y,z+h), \quad u_6(x,y,z)= u(x+h,y+h,z+h), \\
&u_7(x,y,z)=u(x,y+h,z), \quad u_8(x,y,z)=u(x+h,y,z) 
\end{align*}
for $(x,y,z)\in 2h\ze^3$, $u\in \ell^2(h\ze^3)$, and we set $U_hu=(u_j)_{j=1}^8$. 
Then we have 
\[
H_{\mathrm{KS};h}=
{\small \begin{pmatrix}
   m&    0&    0&    0& D^-_3&    0&  D^-_2&  D^-_1\\
   0&    m&    0&    0&   0&  D^-_3&  D^+_1& -D^+_2\\
   0&    0&    m&    0& D^+_2&  D^+_1& -D^+_3&    0\\
   0&    0&    0&    m& D^+_1& -D^-_2&    0& -D^+_3\\
 D^+_3&    0&  D^-_2&  D^-_1&  -m&    0&    0&    0\\
   0&  D^+_3&  D^+_1& -D^+_2&   0&   -m&    0&    0\\
 D^+_2&  D^-_1& -D^-_3&    0&   0&    0&   -m&    0\\
 D^+_1& -D^-_2&    0& -D^-_3&   0&    0&    0&   -m
 \end{pmatrix}, }
\]
and the continuum limit is 
\[
H_{\mathrm{KS};0}=
{\small \begin{pmatrix}
   m&    0&    0&    0& D_3&    0&  D_2&  D_1\\
   0&    m&    0&    0&   0&  D_3&  D_1& -D_2\\
   0&    0&    m&    0& D_2&  D_1& -D_3&    0\\
   0&    0&    0&    m& D_1& -D_2&    0& -D_3\\
 D_3&    0&  D_2&  D_1&  -m&    0&    0&    0\\
   0&  D_3&  D_1& -D_2&   0&   -m&    0&    0\\
 D_2&  D_1& -D_3&    0&   0&    0&   -m&    0\\
 D_1& -D_2&    0& -D_3&   0&    0&    0&   -m
 \end{pmatrix}. }
\]
By setting 
\[
M={\small \begin{pmatrix}
1&  0& 0&  0&  1& 0&  0& 0\\
 i&  0& 0&  0& -i& 0&  0& 0\\
 0&  1& 0&  0&  0& 1&  0& 0\\
 0& -i& 0&  0&  0& i&  0& 0\\
 0&  0& 1&  0&  0& 0&  1& 0\\
 0&  0& i&  0&  0& 0& -i& 0\\
 0&  0& 0&  1&  0& 0&  0& 1\\
 0&  0& 0& -i&  0& 0&  0& i
 \end{pmatrix}, }
\]
we have 
\begin{align*}
&M^{-1} H_{\mathrm{KS};0} M \\
&={\scriptsize \begin{pmatrix}
          m&            0&         D_3& -i D_1 + D_2&            0&           0&            0&           0\\
           0&            m& i D_1 + D_2&         -D_3&            0&           0&            0&           0\\
         D_3& -i D_1 + D_2&          -m&            0&            0&           0&            0&           0\\
 i D_1 + D_2&         -D_3&           0&           -m&            0&           0&            0&           0\\
           0&            0&           0&            0&            m&           0&          D_3& i D_1 + D_2\\
           0&            0&           0&            0&            0&           m& -i D_1 + D_2&        -D_3\\
           0&            0&           0&            0&          D_3& i D_1 + D_2&           -m&           0\\
           0&            0&           0&            0& -i D_1 + D_2&        -D_3&            0&          -m
\end{pmatrix}}\\
&= {\footnotesize \begin{pmatrix}
m\mathbf{1}_2 & D_1 \s_2+ D_2\s_1 +D_3\s_3 & 0 & 0 \\
D_1 \s_2+ D_2\s_1 +D_3\s_3 & - m\mathbf{1}_2 & 0 & 0\\
0 & 0 & m\mathbf{1}_2 & D_1 \overline{\s_2}+ D_2\s_1 +D_3\s_3 \\
0 & 0 & D_1 \overline{\s_2}+ D_2\s_1 +D_3\s_3 & -m\mathbf{1}_2
\end{pmatrix}}\\
&={ \begin{pmatrix}
D_1\a_2+D_2\a_1+D_3\a_3+m\b & 0 \\
0 & \overline{D_1\a_2+D_2\a_1+D_3\a_3+m\b} . 
\end{pmatrix}}
\end{align*}
This gives a direct sum of a representation of 3D Dirac operator and its complex conjugate
(another representation). Of course, the final form depends on the choice of the 
diagonalization matrix $M$. These matrix computations were aided by a symbolic computation tool
\textsf{SymPy} on \textsf{Python}. 


\end{document}